\newtheorem{theorem}{Theorem}
\newtheorem{remark}{Remark}
\newtheorem{proposition}{Proposition}
\begin{document}

\title{Opportunistic Beamforming with Wireless Powered $1$-bit Feedback Through Rectenna Array}

\author{Ioannis Krikidis,~\IEEEmembership{Senior Member,~IEEE}
\thanks{Copyright (c) 2015 IEEE. Personal use of this material is permitted. However, permission to use this material for any other purposes must be obtained from the IEEE by sending a request to pubs-permissions@ieee.org.}
\thanks{I. Krikidis is with the Department of Electrical and Computer Engineering, Faculty of Engineering, University of Cyprus, Nicosia 1678 (E-mail: {\sf krikidis@ucy.ac.cy}).}
\thanks{This work was funded by the Research Promotion Foundation, Cyprus, under the Project FUPLEX with Pr. No. CY-IL/0114/02.}

}

\maketitle

\begin{abstract}
This letter deals with the opportunistic beamforming (OBF) scheme for multi-antenna downlink with spatial randomness. In contrast to conventional OBF, the terminals return only $1$-bit feedback, which is powered by wireless power transfer through a rectenna array. We study two fundamental topologies for the combination of the rectenna elements; the direct-current combiner and the radio-frequency combiner. The beam outage probability is derived in closed form for both combination schemes, by using high order statistics and stochastic geometry.               
\end{abstract}

\begin{keywords}
Opportunistic beamforming, spatial randomness, wireless power transfer, antenna array, outage probability. 
\end{keywords}

\section{Introduction}

\IEEEPARstart{O}{pportunistic} beamforming (OBF) is a robust communication tool, which exploits multi-user diversity  to achieve full multiplexing gain \cite{SHA, PUN}. OBF requires a continuous feedback mechanism of the achieved beam signal-to-interference-plus-noise ratio (SINR) in order to to assign the orthonormal beams to the corresponding terminals.  Although conventional studies assume homogeneous networks without path-loss effects,  the authors in \cite{SAM} study OBF for networks with location randomness and perfect SINR feedback.          

On the other hand, wireless powered communication (WPC) is a new energy solution for the future autonomous and self-sustainable networks \cite{ZHA,JU,KRI}. It refers to terminals that have the capability to power their operations by the received electromagnetic radiation.  The fundamental block for the implementation of this technology is the rectifying-antenna (rectenna), which is a diode-based circuit that converts the radio-frequency (RF) signals to direct-current (DC) voltage \cite{XIA}. Typically, a single rectenna is not sufficient to support reliable terminal operation. Alternatively, properly interconnecting several rectennas could ensure sufficient rectification \cite{VOL}. This interconnection can be done in the DC or the RF domain; the DC-combiner requires a rectification circuit at each antenna element, while the RF-combiner corresponds to a single rectification circuit.    

In this letter, we study these two fundamental rectenna array configurations for the multi-antenna downlink channel with OBF and spatial randomness. The terminals power their feedback channel  from a power beacon (PB) that broadcasts energy using a separate frequency band. In order to respect the doubly near-far problem associated with WPC \cite{JU}, terminals can only return $1$-bit feedback, which shows the outage status for a preassigned beam \cite{SIM}. The beam outage probability is derived for both rectenna array configuration, by using high order statistics and stochastic geometry tools. To  the  best  of  authors'  knowledge,  the  current  letter  is  the  first  to analytically investigate  OBF with $1$-bit feedback and spatial randomness as well as the impact  of  rectenna array configurations on WPC. 

\section{System model}

We consider a single-cell WPC scenario with multiple randomly deployed terminals, where the coverage area is modeled as a disc of radius $\rho$. An access point (AP) and a PB, operating in different frequencies, are co-located at the origin of the disc with an exclusion zone of radius $\xi$ around them \cite{XIA}; $\mathcal{B}$ denotes the coverage area of interest. The location of the terminals is modeled as a homogeneous Poisson point process (PPP) $\Phi$ with intensity $\lambda$. The AP and the terminals are equipped with $M$ and $N$ antennas, respectively, while the PB has a single transmit antenna; all antennas are omnidirectional. The terminals have wireless power transfer (WPT) capabilities and can harvest energy from the PB's transmitted signals through a rectenna array configuration with $L$ elements. Time is slotted and the RF energy harvested cannot be stored for future use (batteryless architecture i.e., the energy harvested during the $t$-th slot is immediately used e.g., \cite{JU,NAS1,DING}).  $1$-bit feedback is available from the terminals to AP; a random single antenna is used for the transmission and the reception  of the feedback channel (more sophisticated antenna selection schemes cannot be applied due to the considered unfaded uplink channel model).

\subsection{Channel model}
We assume that wireless links suffer from both  small-scale block fading and large-scale path-loss effects. The fading is Rayleigh distributed so the power of the channel fading is an exponential random variable with unit variance; $h_{k,i,j}$ is the channel coefficient for the link between the $k$-th AP's transmit antenna and the  $j$-th receive antenna for the $i$-th terminal;  $g_{i,j}\equiv |g_{i,j}|e^{\jmath \theta_{i,j}}$ is the channel coefficient for the link between the PB and the $j$-th rectenna of the $i$-th terminal. The received power is proportional to $d_i^{-\alpha}$ where $d_i$ is the Euclidean distance between the AP/PB and the $i$-th terminal, $\alpha>2$ denotes the path-loss exponent. In addition, all wireless links exhibit additive white Gaussian noise (AWGN) with variance $\sigma^2$; $n_{i,j}$ denotes the AWGN at the $j$-th receive antenna for the $i$-terminal.  We assume an unfaded AWGN channel for the feedback link which only suffers from path-loss attenuation \cite{KOB,LIU}; this model highlights the impact of the doubly near-far problem on the achieved performance and simplifies our analysis. 

\subsection{Data communication}
The AP applies an OBF strategy and in each time slot it serves
selected users; to do that, it generates $M$ isotropic distributed random orthonormal vectors $\{\pmb{u}_1,\ldots,\pmb{u}_M \}$ with $\pmb{u}_m\in \mathbb{C}^{M\times 1}$, which  represent the beams that are used in order to transmit the $M$ information streams. By omitting time index and carriers, the baseband-equivalent transmitted signal is given by
\begin{align}
\pmb{v}=\sum_{m=1}^{M} \pmb{u}_m s_m,
\end{align}
where $\mathbb{E}[||\pmb{v}||^2]=M$, and $s_m$ is the $m$-th transmitted symbol. The signal received at the $j$-th antenna of the $i$-th terminal is given by
\begin{align}
r_{i,j}=\sqrt{P_{t}d_i^{-\alpha}}\pmb{h}_{i,j}^{T}\pmb{v}+n_{i,j},  
\end{align}
where $\pmb{h}_{i,j}^T=[h_{1,i,j},\ldots, h_{M,i,j}]$, $P_t$ denotes the AP's transmitted power. The SINR for the $l$-th beam at the $j$-th receive antenna of the $i$-th terminal is equal to
\begin{align}
\Gamma_{i,j,l}=\frac{|\pmb{h}_{i,j}^T\pmb{u}_l|^2}{\frac{d_i^\alpha}{P_t}+\sum_{m\neq l}^M |\pmb{h}_{i,j}^T \pmb{u}_m|^2}.
\end{align}
Each terminal employs a selection combiner (SC) \cite{PUN} in order to keep the complexity low (i.e., $1$ RF chain) and therefore the achieved SINR for the $l$-th beam at the $i$-th terminal is given by 
\begin{align}
\Gamma_{i,l}=\max_{1\leq j \leq N}\Gamma_{i,j,l}.
\end{align}

\subsection{WPT operation}
The PB operates in a separate frequency band in order to avoid interference with the communication links \cite{XIA}. The transmitted RF signal at the $t$-th time slot is given by 
\begin{align}
s(t)\!=\!\sqrt{2P_h}\Re\!\left\{\!x(t)e^{\jmath 2\pi ft} \!\right\}\!\!=\!\!\sqrt{2P_h}\Re\!\left\{\!e^{\jmath [2\pi ft+\arg{x(t)}]}\!\right\},
\end{align}
where $\Re\{z\}$ denotes the real part of $z$, $P_h=\mathbb{E}[s^2(t)]$ is the PB's transmit power, $f$ denotes the carrier frequency, and $x(t)$ is a modulated energy signal with $|x(t)|^2=1$. The received signal at the $j$-th antenna of the $i$-th terminal is given by 
\begin{align}
y_{i,j}(t)&\!=\!\sqrt{2P_hd_i^{-\alpha}}|g_{i,j}(t)| \Re \left\{e^{\jmath [2\pi ft+\arg{x(t)}+\theta_{i,j}(t)]} \right\} \nonumber \\
\;\;\;&=\sqrt{2P_h d_i^{-\alpha}}|g_{i,j}(t)|\cos\left(2\pi ft+\arg{x(t)}+\theta_{i,j}(t) \right),
\end{align}
where WPT from AWGN is considered negligible and thus can be ignored \cite{ZHA,JU}. 

\begin{figure}[t]
\centering
\includegraphics[width=\linewidth]{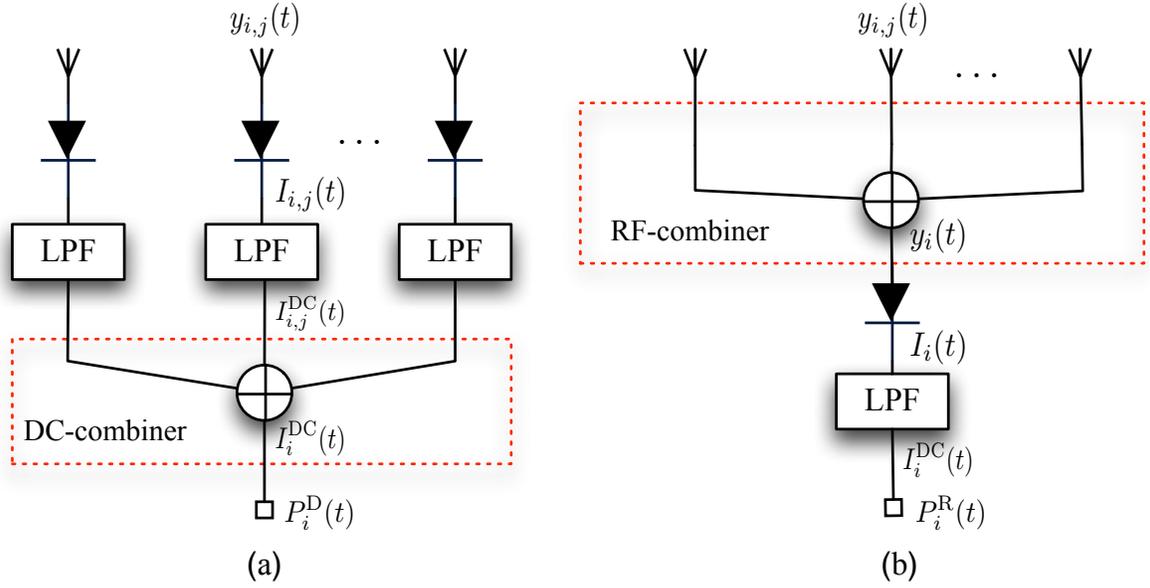}
\vspace{-1cm}
\caption{Rectenna array architectures for WPT with $L$ antenna elements; (a) DC-combiner, (b) RF-combiner.}
\label{model}
\end{figure}

\section{WPT- Rectenna array configurations}
Each terminal is equipped with an array of $L$  elements in order to boost the rectification process. The interconnection of these elements can be performed in the DC or the RF domain. 

\subsection{DC-combiner}

In the DC-combiner topology (see Fig. \ref{model}(a)), each element operates in an independent way and  has its own rectification circuit in order to harvest DC power. The DC currents at the output of each rectifier are combined (simple addition) in order to generate an aggregate DC current, which is the final input to the application.  More specifically, the output current of the Schottky diode for the $j$-th antenna element of the $i$-th terminal is given by Shockley's diode equation \cite{BOYL}  
\begin{align}
I_{i,j}(t)&=I_s\left(e^{\frac{y_{i,j}(t)}{\mu V_T}}-1 \right)=I_s \sum_{k=1}^{\infty} \left(\frac{y_{i,j}(t)}{\mu V_T}\right)^k,\label{I1}
\end{align}
where $I_s$ denotes the reverse saturation current of the diode, $\mu \in [1\;2]$ is an ideality factor (function of the operating conditions and physical contractions), and $V_T$ is the thermal voltage; the RHS in \eqref{I1} is based on the Taylor series expansion of an exponential function. The low pass filter (LPF) at the output of each diode eliminates the harmonic terms ($k>2$) in order to produce a relatively smooth DC current i.e.,  
\begin{align}
I_{i,j}^{\text{DC}}(t)=\frac{I_s}{(\mu V_T)^2} d_i^{-\alpha}P_h |g_{i,j}(t)|.
\end{align}
The DC combiners adds together the $L$ DC components in order to produce a total DC current given by
\begin{align}
I^{\text{DC}}_i(t)=e_d\sum_{j=1}^{L}I_{i,j}^{\text{DC}}(t)=\frac{e_d I_s P_h d_i^{-\alpha}}{( \mu V_T)^2} \sum_{j=1}^{L}|g_{i,j}(t)|,
\end{align}
where $e_d$ denotes the efficiency of the DC combining circuit \cite{VOL}.  The total harvested DC power is a linear function of $I^{\text{DC}}_i(t)$ and is written as 
\begin{align}
P_i^{\text{D}}(t)=\zeta_{d} I^{\text{DC}}_i(t)=\frac{\zeta_d e_dI_s P_h d_i^{-\alpha}}{(\mu V_T)^2} \sum_{j=1}^{L}|g_{i,j}(t)|^2,
\end{align}
where $\zeta_d$ denotes the conversion efficiency for the DC-combiner topology; we define $G\triangleq \frac{(\mu V_T)^2 \sigma^2}{\zeta_d e_d I_s P_h}$. 

\subsection{RF-combiner}

The RF-combiner topology (see Fig. \ref{model}(b)) requires only one rectification circuit and combines the antenna inputs in the RF domain. This combination does not requires any intelligence and passively adds the $L$ signals before the rectification process i.e.,   
\begin{align}
y_i(t)&=e_r \sum_{j=1}^{L}y_{i,j}(t) \nonumber \\
&=\sqrt{2P_h d_i^{-\alpha}} e_r \Re\left\{ e^{\jmath [2\pi ft+ \arg{x(t)}]}\sum_{j=1}^L|g_{i,j}(t)|e^{j\theta_{i,j}(t)}  \right\} \nonumber \\
&= \sqrt{2P_hd_i^{-\alpha}}|c_i(t)|e_r \cos\left(2\pi ft+ \arg{x(t)}+\theta_i(t) \right),
\end{align}
where $e_r$ denotes the efficiency of the RF combining circuit \cite{VOL}, and $\sum_{j=1}^L|g_{i,j}(t)|e^{\jmath \theta_{i,j}(t)}=|c_i(t)|e^{j\theta_i(t)}$ is a circularly-symmetric complex Gaussian random variable with zero mean and variance $L$. This combined signal is the input to the single Schottky diode;  by removing the harmonic terms ($k>2$) through LPF, the DC output and the associated harvested DC power are given by
\begin{align}
&I_i^{\text{DC}}(t)=\frac{e_r I_s P_h d_i^{-\alpha}}{(\mu V_T)^2} |c_i(t)|, \\
&P^{\text{R}}_{i}(t)=\zeta_r I_i^{\text{DC}}(t)=\frac{\zeta_r e_r I_s P_h d_i^{-\alpha}}{(\mu V_T)^2}|c_i(t)|^2,
\end{align}
where $\zeta_r$ is the conversion efficiency for the RF-combiner topology; we define $Y\triangleq \frac{(\mu V_T)^2 \sigma^2}{\zeta_r e_r I_s P_h}$.

\section{OBF with $1$-bit feedback}

The main problem in the OBF scheme is the assignment of the $M$ beams to $M$ selected terminals; this assignment requires a feedback of the achieved SINR for each beam from all terminals. Based on the received feedback, the AP allocates each beam to the terminal with the highest SINR in order to maximize the sum rate. However, this feedback channel refers to the transmission of a  high amount of information, which requires significant system resources such as bandwidth and power. In the considered WPC setup, the terminals have not their own power supply and harvest energy from the PB's RF signal in order to power their feedback transmission. Due to the small efficiency of the WPT process and the associated doubly near-far problem, we  squeeze the feedback channel into $1$-bit \cite{SIM}. In this case, the main steps of the OBF scheme are summarized as follows:
\begin{itemize}
\item The AP broadcasts the $M$ beamforming vectors to the terminals. 
\item Each terminal measures the SINR for only one beam, randomly assigned to it by the AP. The measured SINR $\Gamma_{i,l}$ is compared to the target SINR value $\Delta$ (function of the quality-of-service); $1$-bit represents whether or not $\Gamma_{i,l}$ is above the threshold.   
\item  Each terminal harvests energy through the rectenna array configuration. 
If the harvested energy is sufficient (i.e., uplink is not in outage), the terminal  transmits $1$-bit feedback to the AP, otherwise remains inactive. 
\item{Based on the received feedback, the AP randomly assigns each beam to one terminal among those who have signaled a SINR on the corresponding beam  above the threshold. If none terminal returns a positive feedback for a specific beam, assignment is performed randomly.}
\end{itemize}           

\subsection{Beam outage probability}

A beam is in outage when the achieved SINR is less than a target SINR $\Delta$; we study the outage performance of the $l$-th beam,  without loss of generality.   
In order to derive the outage probability, we firstly state the following proposition. 

\begin{proposition}
The terminals participating in the beam selection of the $l$-th beam form a homogeneous PPP with an intensity
\begin{align}
&\lambda_l^{\text{D}}=\frac{\lambda}{M(\rho^2-\xi^2)}\!\!\sum_{m=0}^{L-1}\!\! \frac{\Gamma \left(m+\frac{1}{\alpha},\!G\xi^{2\alpha} \right)\!-\!\Gamma \left(m+\frac{1}{\alpha},\!G\rho^{2\alpha} \right)}{\alpha G^{\frac{1}{\alpha}} \Gamma(m+1)}, \\
&\lambda_l^{\text{R}}=\frac{\lambda L^{\frac{1}{\alpha}}\left[\Gamma \left(\frac{1}{\alpha},\!\frac{Y}{L}\xi^{2\alpha} \right)-\Gamma \left(\frac{1}{\alpha},\!\frac{Y}{L}\rho^{2\alpha} \right) \right]}{M\alpha(\rho^2-\xi^2)Y^{\frac{1}{\alpha}}},
\end{align}
for the DC-combiner and the RF-combiner, respectively. 
\end{proposition}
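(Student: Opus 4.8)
The plan is to treat ``participation in the beam selection'' as the outcome of two independent thinnings of the homogeneous PPP $\Phi$ and then to evaluate the resulting mean intensity over the annulus $\mathcal{B}$. A terminal located at distance $r$ from the origin takes part in the selection of the $l$-th beam if and only if (i) it has been assigned that beam, which happens independently with probability $1/M$, and (ii) its wirelessly powered feedback reaches the AP, i.e. the uplink is not in outage. Since, conditionally on $r$, these events are independent of the point locations, the thinning theorem for Poisson processes guarantees that the surviving terminals form a (generally inhomogeneous) PPP whose intensity at radius $r$ is $\frac{\lambda}{M}\,p(r)$, where $p(r)$ is the non-outage probability at distance $r$. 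The homogeneous process claimed in the statement is then the one obtained by spatially averaging this intensity over $\mathcal{B}$, namely $\lambda_l=\frac{1}{\pi(\rho^2-\xi^2)}\int_{\mathcal{B}}\frac{\lambda}{M}p(r)\,dA=\frac{2\lambda}{M(\rho^2-\xi^2)}\int_{\xi}^{\rho}p(r)\,r\,dr$.

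First I would make the non-outage condition explicit. The feedback is transmitted with the harvested power and reaches the AP after a further path-loss attenuation $r^{-\alpha}$, so the received feedback SNR is $P_i^{\text{D}}(t)\,r^{-\alpha}/\sigma^2$ for the DC-combiner and $P_i^{\text{R}}(t)\,r^{-\alpha}/\sigma^2$ for the RF-combiner. Substituting the harvested-power expressions together with the definitions of $G$ and $Y$, these reduce to $r^{-2\alpha}\sum_{j=1}^{L}|g_{i,j}|^2/G$ and $r^{-2\alpha}|c_i|^2/Y$, respectively, so the uplink is not in outage precisely when $\sum_{j=1}^{L}|g_{i,j}|^2\ge G r^{2\alpha}$ (DC) or $|c_i|^2\ge Y r^{2\alpha}$ (RF). The double $r^{-\alpha}$, which reflects the doubly near-far effect of harvesting and then feeding back over the same link, is what produces the factor $r^{2\alpha}$ inside the incomplete gamma arguments.

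Next I would compute $p(r)$ from the channel statistics. For the DC-combiner $\sum_{j=1}^{L}|g_{i,j}|^2$ is a sum of $L$ i.i.d. unit-mean exponentials, hence Erlang with shape $L$, giving the complementary CDF $p(r)=e^{-Gr^{2\alpha}}\sum_{m=0}^{L-1}(Gr^{2\alpha})^m/m!$; for the RF-combiner $|c_i|^2$ is exponential with mean $L$, so $p(r)=e^{-(Y/L)r^{2\alpha}}$. Inserting these into the radial integral and performing the substitution $u=Gr^{2\alpha}$ (resp. $u=(Y/L)r^{2\alpha}$) maps each summand $\int r^{2\alpha m+1}e^{-Gr^{2\alpha}}\,dr$ onto an upper incomplete gamma function evaluated between $G\xi^{2\alpha}$ and $G\rho^{2\alpha}$; the extra factor $r$ coming from the polar area element shifts the order from $m$ to $m+\frac1\alpha$, which is the delicate bookkeeping step. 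Collecting the prefactors (the $\tfrac12$ from $r\,dr$ cancels the $2$ in $2\alpha$) yields the stated $\lambda_l^{\text{D}}$, and the single-term RF computation yields $\lambda_l^{\text{R}}$.

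I expect the main obstacle to be twofold. Analytically, the change of variables must be carried out so that the fractional order $m+\frac1\alpha$ and the arguments $G\xi^{2\alpha},G\rho^{2\alpha}$ emerge correctly from the Jacobian; this is routine but error-prone. Conceptually, the more delicate point is the passage from the genuinely inhomogeneous thinned process to the homogeneous PPP asserted in the statement: this is justified by matching the mean number of participating terminals over $\mathcal{B}$, i.e. by using the spatially averaged intensity, which is the standard equivalent-density argument for location-dependent thinning in a bounded region.
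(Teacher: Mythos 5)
Your proposal is correct and follows essentially the same route as the paper: thin $\Phi$ by $1/M$ for the random beam assignment, express the uplink non-outage condition as $\sum_{j}|g_{i,j}|^2\geq G r^{2\alpha}$ (Erlang tail) or $|c_i|^2\geq Y r^{2\alpha}$ (exponential tail), average over the uniform location on the annulus, and evaluate the radial integral via the substitution $u=Gr^{2\alpha}$ to get the upper incomplete gamma terms of order $m+\frac{1}{\alpha}$. Your explicit remark that the location-dependent thinning is, strictly, inhomogeneous and that the stated homogeneous intensity is its spatial average over $\mathcal{B}$ is a point the paper glosses over, but the computation is identical.
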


\begin{proof}
Since  each terminal is preselected to observe a specific beam in a random way, the terminals which observe the $l$-th beam form a PPP $\Phi_l$ with intensity $\lambda_l=\lambda/M$ (thinning operation \cite{HAE}).   

A terminal becomes active  only when the harvested energy at the $t$-th time slot ensures the successful decoding of $1$-bit information in the uplink. This means that the Shannon capacity of the uplink between the terminal and the AP is higher than $1$ bits per channel use. For the $i$-th terminal, this condition is expressed as
\begin{align}
\log_2 \left(1+\frac{P_i^{\mathcal{Q}}(t)}{d_i^{\alpha}\sigma^2} \right)\geq 1 \Rightarrow P_i^{\mathcal{Q}}(t)\geq d_i^{\alpha} \sigma^2, 
\end{align}
where $\mathcal{Q}\in \{\text{D,R} \}$.  For the DC-combiner topology, the probability that the $i$-th terminal is idle can be written as
\begin{subequations}
\begin{align}
\Pi^{\text{D}}&=\mathbb{P}\{P_i^{\text{D}}(t)< d_i^{\alpha}\sigma^2 \}\;=\mathbb{P}\left\{Z< G d_i^{2\alpha}  \right\} \nonumber \\
&=\mathbb{E}\left[ \frac{\gamma(L,G d_i^{2\alpha})}{\Gamma(L)}\right]\;=\int_0^{2\pi} \int_{\xi}^{\rho}\frac{\gamma(L,G r^{2\alpha})}{\Gamma(L)}f_d(r) r drd\theta \nonumber \\
&=1-\frac{2}{\rho^2-\xi^2}\sum_{m=0}^{L-1} \frac{G^m}{\Gamma(m+1)} \int_{\xi}^{\rho}e^{-Gr^{2\alpha}}r^{2\alpha m+1}dr \label{step1}\\
&=1\!-\!\frac{1}{\alpha(\rho^2-\xi^2)G^{\frac{1}{\alpha}}}\sum_{m=0}^{L-1} \frac{\Gamma \left(m+\frac{1}{\alpha},\!G\xi^{2\alpha} \right)\!-\!\Gamma \left(m+\frac{1}{\alpha},\!G\rho^{2\alpha} \right)}{\Gamma(m+1) }, \label{step2}
\end{align}  
\end{subequations}
\noindent 
where $Z\triangleq \sum_{j=1}^{L}|g_{i,j}|^2$ is a central chi-square random variable with $2L$ degrees of freedom; the cumulative distribution function (CDF) of $Z$ is $F_{Z}(x)=\gamma(L,x)/\Gamma(L)$, where $\gamma(a,x)$ denotes the lower incomplete gamma function \cite{GRAD} and $\Gamma(x)$ denotes the Gamma function; $f_d(x)=1/\pi(\rho^2-\xi^2)$ denotes the probability density function of each point in $\mathcal{B}$;   \eqref{step1} is based on  \cite[8.352]{GRAD} and \eqref{step2} uses the expressions in \cite[3.381]{GRAD}. 

For the RF-combiner topology, the probability that the $i$-th terminal is not able to return a feedback is written as
\begin{align}
\Pi^{\text{R}}&=\mathbb{P}\{P_i^{\text{R}}(t)<d_i^{\alpha}\sigma^2 \}\;=\mathbb{P}\{Z_1 <Yd_i^{2\alpha}\} \nonumber \\
&=\mathbb{E}\left[1-e^{-\frac{Yd_i^{2\alpha}}{L}} \right]\;=1-\int_{0}^{2\pi}\int_{\xi}^{\rho}e^{-\frac{Y}{L}r^{2\alpha}}f_{d}(r)rdrd\theta \nonumber \\
&=1-\frac{L^{\frac{1}{\alpha}}}{\alpha(\rho^2-\xi^2)}\frac{\Gamma \left(\frac{1}{\alpha},\!\frac{Y}{L}\xi^{2\alpha} \right)-\Gamma \left(\frac{1}{\alpha},\!\frac{Y}{L}\rho^{2\alpha} \right)}{Y^{\frac{1}{\alpha}}},
\end{align}
where $Z_1\triangleq |c_i(t)|^2$ is an exponential random variable with parameter $1/L$ and  a CDF equal to $F_{Z_1}(x)=1-e^{-x/L}$.

By using thinning transformation, the terminals which feed $1$-bit for the $l$-th beam form a homogeneous PPP $\Phi_l^{\mathcal{Q}}$  with intensity $\lambda_l^{\mathcal{Q}}=\lambda_l(1-\Pi^{\mathcal{Q}})$.  Substituting the above expressions, the proposition is then proved.   
\end{proof}

According to the proposed OBF scheme, for the $l$-th beam, an outage event occurs when none terminal returns a positive feedback for the achieved beam SINR and the link (if any) between the AP and the random selected terminal is in outage. The case where all terminals return a negative feedback is equivalent to the case where the maximum  achieved beam SINR (among the terminals which feed $1$-bit) is lower than the requested threshold. For the beam outage probability, we state Theorem \ref{th1}.

\begin{theorem}\label{th1}
The beam outage probability for the $l$-th beam and the $\mathcal{Q}$ combiner ($\mathcal{Q}\in\{\rm{D,R}\}$) is given by
\begin{align}
&P_{\rm{out}}^{\mathcal{Q}}=e^{-\lambda_l^{\mathcal{Q}}\pi (\rho^2-\xi^2)(1-F_{\Gamma}(\Delta))}\bigg[ e^{-(\frac{\lambda}{M}-\lambda_l^{\mathcal{Q}})\pi (\rho^2-\xi^2)}+\left(1-e^{-(\frac{\lambda}{M}-\lambda_l^{\mathcal{Q}})\pi (\rho^2-\xi^2)} \right)F_{\Gamma}(\Delta)\bigg], 
\end{align}
where $F_{\Gamma}(x)$ is defined in \eqref{step3}.
\end{theorem}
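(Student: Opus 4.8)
The plan is to realize the outage as the intersection of two ``void'' events of independent thinnings of the beam-$l$ point process, so that $P_{\rm out}^{\mathcal{Q}}$ factorizes into a product of two empty-space probabilities. Assuming the Proposition, the terminals that successfully feed $1$-bit for the $l$-th beam form a homogeneous PPP of intensity $\lambda_l^{\mathcal{Q}}$, while all terminals preassigned to that beam form a PPP of intensity $\lambda/M$; I read $F_\Gamma(\Delta)$ from \eqref{step3} as the probability that a generic terminal's SC-combined beam SINR is below $\Delta$. First I would mark the active process by its SINR: since the energy mark (carried by the $g$-channels) and the SINR mark (carried by the $h$-channels) are conditionally independent given the terminal's location, the marking theorem splits it into an independent ``positive'' PPP of intensity $\lambda_l^{\mathcal{Q}}(1-F_\Gamma(\Delta))$ and a ``negative'' PPP of intensity $\lambda_l^{\mathcal{Q}}F_\Gamma(\Delta)$, while the ``inactive'' terminals form a third independent PPP of intensity $\lambda/M-\lambda_l^{\mathcal{Q}}$. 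The event that no terminal reports a SINR above the threshold is exactly the void event of the positive PPP over $\mathcal{B}$, whose area is $\pi(\rho^2-\xi^2)$, giving the leading factor $\exp(-\lambda_l^{\mathcal{Q}}\pi(\rho^2-\xi^2)(1-F_\Gamma(\Delta)))$.

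Next I would compute the bracket as the outage probability conditioned on this no-positive-feedback event, in which the AP resorts to a random assignment. The decisive observation is the fallback rule: a negative active terminal is known to be in outage, so the AP gambles on an inactive terminal, whose SINR it never learned, whenever one is available. Conditioning on the inactive PPP, with probability $1-\exp(-(\lambda/M-\lambda_l^{\mathcal{Q}})\pi(\rho^2-\xi^2))$ at least one inactive terminal exists and the selected one is in outage with probability $F_\Gamma(\Delta)$; with the complementary probability $\exp(-(\lambda/M-\lambda_l^{\mathcal{Q}})\pi(\rho^2-\xi^2))$ no inactive terminal exists, the only candidates are negative active terminals or none at all, and the beam is in outage with probability one. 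Summing these two branches reproduces the bracket, and multiplying by the leading void probability -- legitimate because the positive and inactive PPPs are independent -- yields $P_{\rm out}^{\mathcal{Q}}$; the combiner type $\mathcal{Q}\in\{\rm D,R\}$ enters only through $\lambda_l^{\mathcal{Q}}$.

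The main obstacle is modeling rather than computation, which reduces to two void probabilities. I must pin down the fallback assignment and the outage event so that the two branches carry the factors $1$ and $F_\Gamma(\Delta)$. This requires justifying (i) that, absent positive feedback, the AP assigns to an inactive terminal whenever one is present, so that the ``at least one inactive'' branch contributes $F_\Gamma(\Delta)$ rather than $1$; (ii) the convention that an empty candidate set is itself an outage, which is what lets the ``no inactive'' branch contribute a factor of one; and (iii) the independence of the positive and inactive thinnings that permits the final product. A subtler technical caveat is that the inactive terminals are biased toward the cell edge, where the SINR is worse, so equating the conditional outage of the chosen inactive terminal with the uniform-averaged $F_\Gamma(\Delta)$ implicitly treats the SINR and energy marks through their separately location-averaged laws, consistent with the definitions of $\lambda_l^{\mathcal{Q}}$ and $F_\Gamma$.
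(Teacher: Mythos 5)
Your proposal is correct and follows essentially the same decomposition as the paper: the probability of no positive feedback (which the paper obtains by summing $[F_{\Gamma}(\Delta)]^n$ against the Poisson law of $N_{\Phi_l^{\mathcal{Q}}}(\mathcal{B})$ rather than via your marking-theorem void probability --- the two computations are identical) multiplied by the bracket that handles the fallback random assignment among inactive terminals. Your caveats (i)--(iii) are precisely the implicit modeling conventions the paper relies on, and your final observation --- that inactive terminals are biased toward the cell edge so their conditional SINR law is not truly the uniformly location-averaged $F_{\Gamma}$ --- is a genuine subtlety that the paper silently glosses over.
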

\begin{proof}
The CDF of the observed beam SINR for a given path-loss value $d$ is written as \cite{PUN}
\begin{align}
F_{\Gamma}(x|d_i=d)=\left[1-\frac{e^{-\frac{xd^\alpha}{P_t}}}{(x+1)^{M-1}} \right]^N,
\end{align}
with expectation over $d$, the observed beam SINR has a CDF given by
\begin{align}
F_{\Gamma}(x)&=\int_{0}^{2\pi}\int_{\xi}^{\rho}\left[1-\frac{e^{-\frac{xr^\alpha}{P_t}}}{(x+1)^{M-1}} \right]^N f_d(r)r dr d\theta \nonumber \\
&\;\;=1+\frac{2}{\alpha(\rho^2-\xi^2)} \left(\frac{P_t}{x} \right)^{\frac{2}{\alpha}} \sum_{m=1}^{N}\binom{N}{m}\frac{(-1)^{m}}{m^{\frac{2}{\alpha}}(x+1)^{m(M-1)}} \nonumber \\
&\;\;\;\;\times \left[\Gamma\left(\frac{2}{\alpha},\frac{xm\xi^\alpha}{P_t}\right)-\Gamma\left(\frac{2}{\alpha},\frac{xm\rho^\alpha}{P_t}\right)\right], \label{step3}
\end{align}
where \eqref{step3} uses the binomial theorem as well as the expressions in \cite[3.381]{GRAD}. On the other hand, let's assume that $n$ terminals are able to return a feedback to the AP. By using high order statistics,  the CDF of the maximum beam SINR i.e., $\Gamma_l^*=\max_{1\leq i\leq n} \Gamma_{i,l}$,  conditioning on $n$ and path-loss values is written as  
\begin{align}
F_{\Gamma^*}(x|n,d_1,\ldots,d_n)=\prod_{i=1}^{n}F_{\Gamma}(x|d_i).
\end{align}
\noindent With expectation over the path-loss values, we have
\begin{align}
F_{\Gamma^*}(x|n)\!\!=\!\!\left[\int_{0}^{2\pi}\!\!\!\!\int_{\xi}^{\rho} F_{\Gamma}(x|r)f_d(r)d r d\theta\right]^{n}\!\!\!=[F_{\Gamma}(x)]^n.
\end{align}
The beam outage probability for the $l$-th beam and the $\mathcal{Q}$ combiner is expressed as
\begin{align}
P_{\text{out}}^{\mathcal{Q}}&=\mathbb{E}\{\Gamma_l^*<\Delta| \Phi_l^{\mathcal{Q}} \}\mathbb{E}\{\Gamma_l<\Delta| \overline{\Phi_l^{\mathcal{Q}}} \} \nonumber \\
&=\sum_{n=0}^{\infty}\mathbb{P}\{ \Gamma^*_l<\Delta|n\}\mathbb{P}\{N_{\Phi_l^{\mathcal{Q}}}(\mathcal{B})=n \} \nonumber \\
&\times \left[\mathbb{P}\{N_{\overline{\Phi_l^{\mathcal{Q}}}}(\mathcal{B})=0\}+\mathbb{P}\{N_{\overline{\Phi_l^{\mathcal{Q}}} }(\mathcal{B})>0\}\mathbb{P}\{\Gamma_l<\Delta\} \right],\label{genexp}
\end{align}
with
\begin{align}
&\mathbb{P} \big\{N_{\overline{\Phi_l^{\mathcal{Q}}}}(\mathcal{B})=0 \big\}=e^{-\overline{\lambda_l^{\mathcal{Q}}}|\mathcal{B}|}, \label{exp1}\\
&\sum_{n=0}^{\infty}\mathbb{P}\{ \Gamma^*_l<\Delta|n\}\mathbb{P}\{N_{\Phi_l^{\mathcal{Q}}}(\mathcal{B})=n \} \nonumber \\
&=\sum_{n=0}^{\infty}\frac{e^{-\lambda_l^{\mathcal{Q}}|\mathcal{B}|}\left(\lambda_l^{\mathcal{Q}}|\mathcal{B}|F_{\Gamma}(\Delta) \right)^n}{n!}=e^{-\lambda_l^{\mathcal{Q}}|\mathcal{B}|(1-F_{\Gamma}(\Delta))}, \label{exp2}
\end{align}
where $\overline{\Phi_l^{\mathcal{Q}}}$ denotes  the complementary homogeneous PPP with intensity $\overline{\lambda_l^{\mathcal{Q}}}\triangleq \lambda_l \Pi^{\mathcal{Q}}$, which is formed by the terminals that are not able to return $1$-bit feedback; $N_{\mathcal{Z}}(\mathcal{B})$ denotes the number of terminal in $\mathcal{B}$ for a PPP $\mathcal{Z}$, and $|\mathcal{B}|=\pi(\rho^2-\xi^2)$. By substituting  \eqref{step3}, \eqref{exp1}, \eqref{exp2} into \eqref{genexp}, we prove the statement. 
\end{proof}

\begin{remark}\label{rm1}
For $P_t,P_h\rightarrow \infty$, the beam outage probability for both combination schemes asymptotically converges  to 
\begin{align}
P_{\rm{out}}=e^{-\frac{\lambda}{M} \pi (\rho^2-\xi^2)(1-F_{\Gamma}^{\infty}(\Delta))},
\end{align}
where $F_{\Gamma}^{\infty}(x)=[1-1/(x+1)^{M-1}]^N$. 
\end{remark}
\begin{proof}
For $P_t,P_h\rightarrow \infty$, all the terminals successfully transmit $1$-bit feedback and thus $\lambda_l^{\mathcal{Q}}\rightarrow \lambda_l$. 
Remark \ref{rm1}
can be straightforwardly obtained from $P_{\text{out}}=\mathbb{E}\{\Gamma_l^*<\Delta| \Phi_l \}$.
\end{proof}

\section{Numerical results and discussion}

\begin{figure}[t]
\centering
\includegraphics[width=0.8\linewidth]{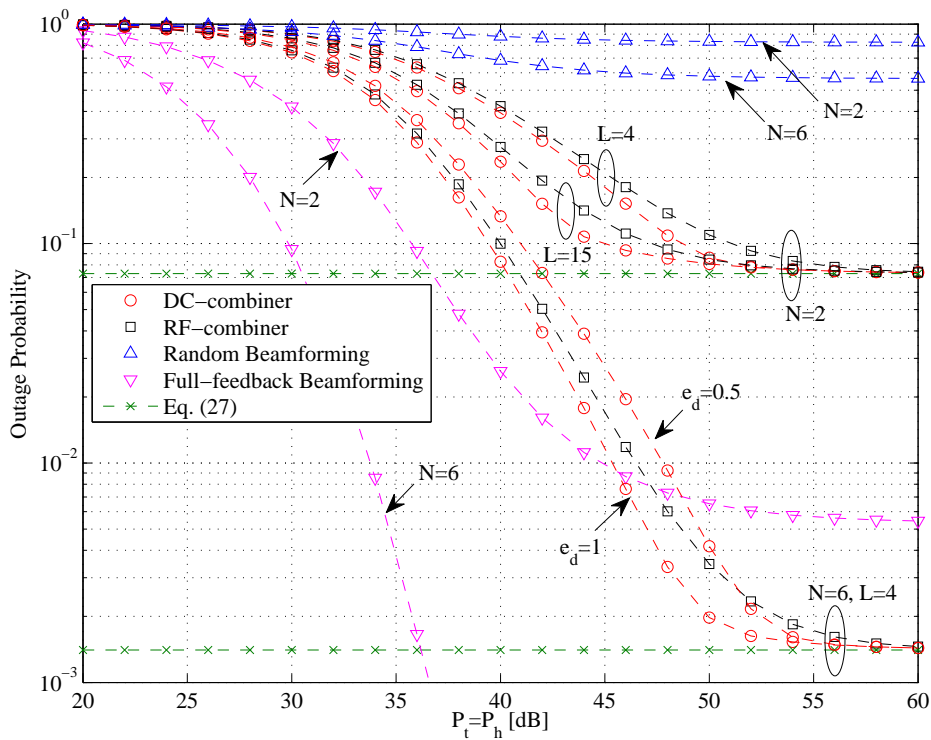}
\vspace{-0.7cm}
\caption{Outage Probability versus the transmit power with $P_t=P_h$; $\xi=2$ m, $\rho=10$ m, $I_s=1$ mA, $\mu=2$, $V_T=28.85$ mV, $\zeta_d=\zeta_r=0.9$, $e_r=1$, $e_d=\{1,0.5\}$, $\sigma^2=-10$ dBm, $\alpha=3$, $M=2$, $N=\{2,6\}$, $L=\{4,15\}$, $\Delta=10$ dB, and $\lambda=0.1$; the dashed lines represent the theoretical results.}
\label{fig1}
\end{figure}

Fig. \ref{fig1}  plots the beam outage probability versus the transmit power for both RF/DC combination schemes; the random beamforming (no feedback) and the full-feedback OBF \cite{PUN} (terminals feed SINR information for all beams)  are used as benchmarks. The first main observation is that the proposed $1$-bit OBF significantly outperforms random assignments. The associated gain increases as the number of receive antennas $N$ increases (receive diversity).  On the other hand, it can be seen that the beam outage probability performance is improved as $L$ increases for moderate values of $P_h$ (see the case with $N=2$). As the size of the rectenna array increases, the terminals harvest more energy and therefore return $1$-bit feedback with a higher probability.

As for the combination topologies, it can be seen that the DC-combiner outperfoms the RF-combiner for the ideal case with $e_d=e_r=1$ e.g., gain of $3$ dB for $P_t=50$ dB and $N=6$. However, the performance of the combination schemes and their suitability highly depend on the quality of the associated combining circuits. We can  observe that the RF-combiner achieves a lower outage probability than the DC-combiner when $e_r=1$ and $e_d=0.5$ (see the case with $N=6$). Therefore, the designer should carefully take into account the insertion losses of the combining circuits in order to select the best configuration. For high values of $P_t,P_h$ both combination schemes converge to the same outage probability floor, in accordance to Remark \ref{rm1}.


\begin{thebibliography}{5}

\bibitem{SHA} M. Sharif and B. Hassibi, ``On the capacity of MIMO broadcast channels with partial side information,'' \emph{IEEE Trans. Inf. Theory}, vol. 51, pp. 506--522, Feb. 2005. 

\bibitem{PUN} M. -O. Pun, V. Koivunen, and H. V. Poor, ``Performance analysis of joint opportunistic scheduling and receiver design for MIMO-SDMA downlink systems,'' \emph{IEEE Trans. Comm.}, vol. 59, pp. 268--280, Jan. 2011. 

\bibitem{SAM} T. Samarasinghe, H. Inaltekin, and J. S. Evans, ``On the outage capacity of opportunistic beamforming with random user locations,'' \emph{IEEE Trans. Comm.}, vol. 62, pp. 3015--3026, Aug. 2014. 


\bibitem{ZHA} R. Zhang and C. K. Ho, ``MIMO broadcast for simultaneous wireless information and power transfer,'' \emph{IEEE Trans. Wireless Comm.}, vol. 12, pp. 1989-2001, May 2013.

\bibitem{JU} H. Ju and R. Zhang, ``Throughput maximization in wireless powered communication networks,'' \emph{IEEE Trans. Wireless Comm.}, vol. 13, pp. 418--428, Jan. 2014. 

\bibitem{KRI} I. Krikidis, S. Timotheou, S. Nikolaou, G. Zheng, D. W. K. Ng, and R. Schober, ``Simultaneous wireless information and power transfer in modern communication systems'', \emph{IEEE Comm. Mag.}, vol. 52, pp. 104-110, Nov. 2014.


\bibitem{XIA} M. Xia and S. Aissa, ``On the efficiency of far-field wireless power transfer,'' \emph{IEEE Trans. Sign. Proc.}, vol. 63, pp. 2835--2847, June 2015. 


\bibitem{VOL} U. Olgun, C. -C. Chen, and J. L. Volakis, ``Investigation of rectenna array configurations for enhanced RF power harvesting,'' \emph{IEEE Ant. Wireless Prop. Lett.}, vol. 10, pp. 262--265, 2011. 


\bibitem{SIM} J. Diaz, O. Simeone, and Y. Bar-Ness, ``How many bits of feedback is multiuser diversity worth in MIMO downlink?,'' in \emph{Proc. IEEE Symp. Spread Spec. Tech. Appl.}, Manaus, Brazil, Aug. 2006, pp. 505--509.

\bibitem{NAS1} A. A. Nasir, X. Zhou, S. Durrani, and R. A. Kennedy, ``Relaying protocols for wireless energy harvesting and information processing,'' \emph{IEEE Trans. Wireless Comm.}, vol. 12, pp. 3622--3636, July 2013. 


\bibitem{DING} Z. Ding and H. V. Poor, ``Cooperative energy harvesting networks with spatially random users,'' \emph{IEEE Sign. Proc. Lett.}, vol. 20, pp. 1211-1214, Dec. 2013.  

\bibitem{KOB} M. Kobayashi, N. Jindal, and G. Caire, ``Training and feedback optimization for multiuser MIMO downlink,'' \emph{IEEE Trans. Comm.}, vol. 59, pp. 2228--2240, Aug. 2011. 

\bibitem{LIU} C. -F. Liu and C. -H. Lee, ``Information and power transfer under MISO channel with finite-rate feedback,'' in \emph{Proc. IEEE Global Comm. Conf.}, Atlanta, USA, Dec. 2013, pp. 2519--2523. 


\bibitem{BOYL} R. L. Boylestad and L. Nashelsky, {\it Electronic devices and circuit theory,} 11-th Ed, Pearson Ed., 2013.

\bibitem{HAE} M. Haenggi, \emph{Stochastic geometry for wireless networks}, Cambridge Uni. Press, 2013. 

\bibitem{GRAD} I. S. Gradshteyn and I. M. Ryzhik, {\it Table of integrals, series, and products}, Elsevier, 2007. 

\end{thebibliography}
\end{document}